
\documentclass[conference,letterpaper]{IEEEtran}
\IEEEoverridecommandlockouts

\addtolength{\topmargin}{9mm}

%
%

\usepackage{ifpdf}
\ifpdf
    \usepackage[pdftex]{graphicx}
    \graphicspath{{figs/}{matlab/}}   
    \usepackage[update]{epstopdf}
\else
	\usepackage{graphicx}
	\graphicspath{{figs/}{matlab/}}   
\fi

\usepackage[utf8]{inputenc} 
\usepackage[T1]{fontenc}
\usepackage{url}
\usepackage{ifthen}
\usepackage{amsfonts}
\usepackage{mathrsfs}
\usepackage{cite}
\usepackage[cmex10]{amsmath} 
\usepackage{amssymb,amsthm,bm,dsfont}

\usepackage{enumerate}  
\usepackage{color}
\usepackage{setspace}
\usepackage[hidelinks]{hyperref}
\usepackage{cleveref} 
\usepackage{cite}
\usepackage{subcaption}
\allowdisplaybreaks[2]  
\usepackage{multirow}

\usepackage{cases}
\usepackage{balance}
\newif\ifEnv

\Envtrue
\ifEnv
    
\else
    \excludecomment{Env}
\fi

\usepackage{optidef}

\newtheorem{theorem}{\textbf{Theorem}}
\newtheorem{prop}{\textbf{Proposition}}

\newtheorem{definition}{\textbf{Definition}}

\newcommand{\Fc}{\mathcal{F}}

\newcommand{\Pc}{\mathcal{P}}


\interdisplaylinepenalty=2500 

\hyphenation{op-tical net-works semi-conduc-tor}

\begin{document}
\title{\huge{Optimizing Leaky Private Information Retrieval Codes to Achieve ${O}(\log K)$ Leakage Ratio Exponent}} 


\author{%
  \IEEEauthorblockN{Wenyuan Zhao, Yu-Shin Huang, Chao Tian, and Alex Sprintson}
  \IEEEauthorblockA{Department of Electrical and Computer Engineering \\
                    Texas A\&M University\\
                    College Station, TX, 77845\\
                    Email: \{wyzhao, yushin1002, chao.tian, spalex\}@tamu.edu}
}

\maketitle


\begin{abstract}
We study the problem of leaky private information retrieval (L-PIR), where the amount of privacy leakage is measured by the pure differential privacy parameter, referred to as the leakage ratio exponent. Unlike the previous L-PIR scheme proposed by Samy et al., which only adjusted the probability allocation to the clean (low-cost) retrieval pattern, we optimize the probabilities assigned to all the retrieval patterns jointly. It is demonstrated that the optimal retrieval pattern probability distribution is quite sophisticated and has a layered structure: the retrieval patterns associated with the random key values of lower Hamming weights should be assigned higher probabilities. This new scheme provides a significant improvement, leading to an ${O}(\log K)$ leakage ratio exponent with fixed download cost $D$ and number of servers $N$, in contrast to the previous art that only achieves a $\Theta(K)$ exponent,  where $K$ is the number of messages. 
\end{abstract}

\section{Introduction}
Private information retrieval (PIR) \cite{Chor1995} is a cryptographic primitive that allows a user to retrieve information from multiple servers while safeguarding the user's privacy. In the canonical PIR setting, there are a total of $N$ non-colluding servers, where each server stores a complete copy of $K$ messages. When the user wishes to retrieve a message, the message's identity must remain hidden from any individual server. 
This privacy requirement necessarily incurs higher download costs than a protocol without such a requirement. The highest possible ratio between the message information and the downloaded cost is known as the PIR capacity, which was fully characterized by Sun and Jafar \cite{Sun2017}. A more concise optimal code was later introduced in \cite{tian2019capacity}, referred to as the TSC code, which features the shortest possible message length and the smallest query set. Variations and extensions of the canonical PIR problem have since been explored, including PIR with colluding servers \cite{t1,t2,zhou2022two}, PIR with limitations on storage \cite{c1,c2,c3,tian2023shannon,c5,c6,c7,c8,c9,c10,zhu2019new,vardy2023private}, PIR requiring symmetric privacy \cite{d1,d2,d3}, and PIR utilizing side information \cite{s1,s2,s3,s4,s5,s6,s7,lu2023single}. A more comprehensive review of these developments can be found in \cite{ulukus2022private}.

The privacy requirement in the canonical setting can be overly stringent. In practice, the privacy requirement can be relaxed, and a controlled amount of information about the query may be "leaked" to the servers. Samy et al.  \cite{Samy2019,Samy2021} studied the \emph{Leaky} PIR (L-PIR) problem, when the privacy leakage is measured by a non-negative constant $\epsilon$ within the differential privacy (DP) framework. The code used in \cite{Samy2021} is essentially a symmetrized version of the TSC code, with the probability of the "clean" retrieval pattern assigned a higher probability, and all other patterns assigned the same probability. Privacy leakage can also be measured in other  ways, e.g., maximal leakage (Max-L) \cite{Zhou2020a}, mutual information (MI) \cite{lin2021multi,qian2022improved,huang2024weakly}, and a  converse-induced measure \cite{chen2024capacity}.


\begin{figure}[t!]
    \centering
    \includegraphics[width=1\linewidth]{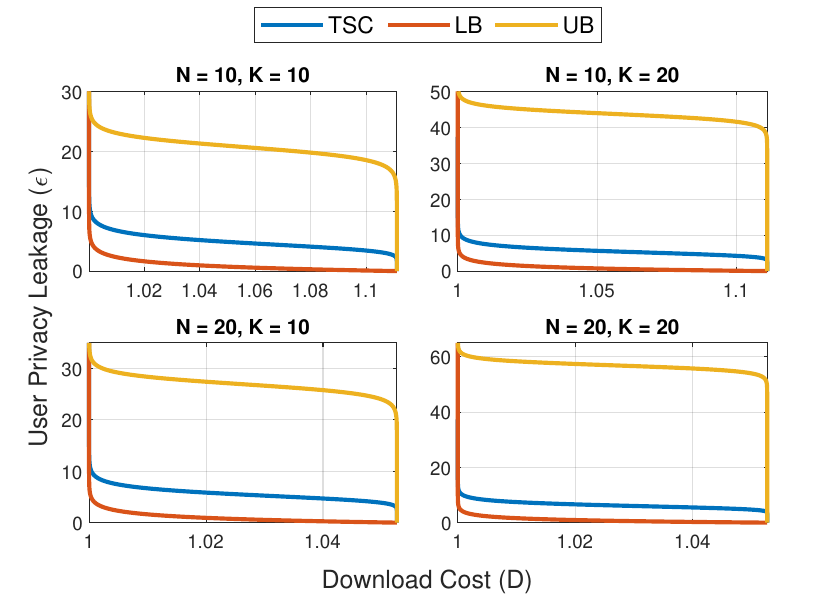}
    \caption{The effect of $(N,K)$ on the $(D, \epsilon)$ tradeoff. TSC is the proposed scheme; UB is the scheme derived in \cite{Samy2021}; LB denotes the lower bound in \cite{Samy2021}.\label{fig:DP}}
    \vspace{-0.3cm}
\end{figure}

In this work, we make the critical observation that for leaky PIR \cite{Samy2019,Samy2021},  the probability assignment based on the TSC code can be further optimized, and the optimal solution is different and considerably more sophisticated than that adopted by Samy et al. In order to find the optimal probability distribution, we first utilize a reduction that leads to a simplification of the underlying optimization problem, and then conduct a careful analysis of the duality conditions. Using the optimal probability allocation, we show that the optimized code is able to achieve $O(\log K)$ leakage ratio exponent for a fixed download cost $D$, whereas the code by Samy et al.  \cite{Samy2021} can only achieve a $\Theta(K )$ exponent. In Fig. \ref{fig:DP}, we illustrate the tradeoff between the privacy leakage and the download cost, for different $(N, K)$ values. It can be seen that the proposed scheme indeed achieves a leakage ratio exponent much lower than the prior art, which is particularly striking when $N$ and $K$ are large. 

\section{Preliminaries}

\subsection{The Canonical PIR Coding Problem}
For any positive integer $K$, we write $[1:K]:=\{1,2,\ldots,K \}$ and $W_{1:K}:=\{W_1,W_2,\ldots,W_K\}$. In a private information retrieval system, there are $N$ non-colluding servers, each of which stores a copy of $K$ independent messages $W_{1:K}$. Without loss of generality, we assume $K\geq2$, and each message consists of $L$ symbols distributed uniformly on a finite set $\mathcal{X}$. In the sequel, for any message $W:=(W[1],W[2],\ldots,W[L])$, we have
\begin{align}
    L:=H(W_1)=H(W_2)=\cdots=H(W_K),
\end{align}
where the entropy $H(\cdot)$ is calculated under the logarithm of base $|\mathcal{X}|$. A user wishes to retrieve a message $W_k$, $k\in[1:K]$, from $N$ servers. To avoid disclosing the identity of $k$ to any given server, a private random key $F^* \in \mathcal{F}^*$ is used to generate a list of (random) queries $Q^{[k]}_{1:N}$ that are sent to the corresponding servers:
\begin{align}
    Q^{[k]}_n := \phi_n(k, F^*), \quad n \in [1:N], \label{eqn:query-phi}
\end{align}
and we shall denote the union of all possible queries sent to the server-$n$ as $\mathcal{Q}_n$. 

For each $n \in [1:N]$, upon receiving a query $q \in \mathcal{Q}_n$, server-$n$ generates an answer $A^{(q)}_n$ as a function of the query $Q^{[k]}_n$ and the stored messages $W_{1:K}$:
\begin{align}
    A^{(q)}_n := \varphi_n(q, W_{1:K}), \quad n \in [1:N], \label{eqn:answer-varphi}
\end{align} 
which is represented by $\ell_n^{(q)}$ symbols in certain coding alphabet $\mathcal{Y}$. We assume $\mathcal{X}=\mathcal{Y}$ in this work to simplify the notation. In addition, we denote $A^{(Q^{[k]}_{n})}_n$ as $A^{[k]}_n$ and $\ell_n^{(Q^{[k]}_{n})}$ as $\ell_n^{[k]}$, both of which are random variables.

With all the answers received from $N$ servers, the user attempts to decode the message $\hat{W}_k$ with the function:
\begin{align}
    \hat{W}_k := \psi( A^{[k]}_{1:N}, k, F^*).
\end{align}
An information retrieval code is said to be valid only when the desired message is recovered accurately, i.e., $\hat{W}_k = W_k$.

We measure the download efficiency by the normalized (worst-case) \emph{average download cost},
\begin{align}
    D := \max_{k \in 1:K} \mathbb{E}\left[\frac{1}{L} \sum_{n = 1}^N \ell_n^{[k]} \right],
\end{align}
where $\ell_n^{[k]}$ is the length of the answer in the code and the expectation is taken with respect to the random key $F$. Note that $D$ is determined only by the queries, without being influenced by the realization of specific messages or the selection of the desired message index $k$.

\subsection{The Differential Privacy Requirement}

Under the classical perfect privacy  requirement \cite{Chor1995, Sun2017,tian2019capacity}, for each $n\in [1:N]$, the query $Q_n^{[k]}$ cannot reveal any information about the index $k$ of the message demanded to any one of the server. 
In the leaky (or the weakly private) setting, a certain amount of privacy leakage can be tolerated \cite{Zhou2020a,Samy2021,lin2021multi,qian2022improved,huang2024weakly,chen2024capacity}, which must be controlled under certain leakage measure. In this work, we adopt the measure based on differential privacy that was considered in \cite{Samy2021}.

\begin{definition}[Differential Privacy]
A randomized mechanism $\mathcal{R}: \mathcal{X} \rightarrow \mathcal{Y}$ is said to provide $(\epsilon, \xi)$-differential privacy ($(\epsilon, \xi)$-DP) if for any $x, x' \in \mathcal{X}$ that differ on a single element (i.e., a single message in the data base), 
\begin{align*}
    \mathbb{P}( \mathcal{R}(x) \in S) \leq e^\epsilon \mathbb{P}( \mathcal{R}(x') \in S) + \xi, \quad \forall S \subset \mathcal{Y}.
\end{align*}
\end{definition}


\noindent{\it $\epsilon$-differential privacy}: The $\epsilon$-differential privacy measure, i.e., ``pure differential privacy" with $\xi=0$, was used to measure the privacy leakage in \cite{Samy2021}. That is, for each $n\in[1:N]$ and $q\in \mathcal{Q}_n$, and every pair $W_{k_1}$, $W_{k_2}\in W_{1:K}$ that $k_1\neq k_2$, the following likelihood ratio needs to be bounded as follows
\begin{align}
\frac{\mathbb{P} \left( Q^{[k_1]}_{n}=q, A^{[k_1]}_{n}=a | W_{1:K} \right)}{\mathbb{P} \left( Q^{[k_2]}_{n}=q, A^{[k_2]}_{n}=a | W_{1:K} \right)} \leq e^{\epsilon}, 
\label{eq:DP-def}
\end{align}
for the privacy parameter $\epsilon \geq 0$. Clearly, the lower $\epsilon$, the more stringent the privacy requirement is. Retrieval becomes completely private when $\epsilon=0$, which is the canonical setting. We refer to the parameter $\epsilon$ as the \emph{leakage ratio exponent}, which is the main object of our study in this work. 

\section{The Base PIR Code}
The TSC code given in \cite{tian2019capacity} will play an instrumental role in this work. In this section, we review the generalized TSC code with permutation and a reduced version of the TSC code, which was used in our previous work \cite{huang2024weakly} for settings under the maximal leakage measure. Finally, we briefly introduce the main results of the L-PIR scheme in \cite{Samy2021}.

\subsection{TSC Code with Permutation}
In the TSC code, the message length is $L = N-1$. A dummy symbol $W_k[0]=0$ is prepended at the beginning of all messages. To better understand the leaky PIR system, we use a generalized TSC code with permutation in \cite{huang2024weakly}, which can be viewed as probabilistic sharing between the permutations (on $N$ servers) of the TSC code in \cite{tian2019capacity}. 

The random key $F^*$ comprises the concatenation of a random vector $F$ and a random bijective mapping $\pi: [1:N]\rightarrow [0:N-1]$ (i.e., a permutation on the set $[1:N]$ but downshifted by $1$)
\begin{align}
    F^*: = (F,\pi)=(F_1,F_2,\dots,F_{K-1},\pi),
\end{align}  
where $F$ is of length ($K-1$) and distributed uniformly in $[0:N-1]^{K-1}$. We shall use $f$ to denote a specific realization of the random vector $F$, and use $\mathcal{F}$ to denote the set of $[0:N-1]^{K-1}$. The random key $F^*$ to retrieve the message $W_k$ is generated by a probability distribution
\begin{align}
    \mathbb{P}_k(F^*)=p^{k,\pi}_{(f)}, \quad F^* = (f,\pi)\in \mathcal{F^*},
\end{align}
where $\mathcal{F}^*= ([0:N-1]^{K-1}\times \mathcal{P})$ for which $\mathcal{P}$ is the set of all bijective mappings $[1:N]\rightarrow [0:N-1]$. According to the law of total probability, $\mathbb{P}_k(F^*)$ needs to satisfy
\begin{align}
    \sum_{f\in \mathcal{F}}\sum_{\pi \in \mathcal{P}} p^{k,\pi}_{(f)}  = 1,\quad k=1,2,\ldots,K.
\end{align}

The query $Q_n^{[k]}$ to server-$n$ is generated by the function 
\begin{equation}
\begin{aligned}
     \phi_n^*(k, F^*) \triangleq (F_1, F_2, &\dots,F_{k-1}, (\pi(n)-\sum_{j = 1}^{K-1}F_j)_N, \\
     & F_{k}, F_{k+1}, \dots, F_{K-1}), \label{eqn:tsc-phi}
\end{aligned}
\end{equation}
where $(\cdot)_N$ represents the modulo $N$ operation. The corresponding answer returned by the server-$n$ is generated by
\begin{align}
    \varphi^*(q,W_{1:K})& \triangleq W_1[Q_{n,1}^{[k]}]\oplus W_2[Q_{n,2}^{[k]}]\oplus \cdots \oplus W_K[Q_{n,K}^{[k]}]\notag\\
    & = W_k[(\pi(n)-\sum_{j = 1}^{K-1}F_j)_N] \oplus \mathscr{I},
\end{align}
where $\oplus$ denotes addition in the given finite field, $Q_{n,m}^{[k]}$ represents the $m$-th symbol of $Q_{n}^{[k]}$, and $\mathscr{I}$ is the interference signal defined as
\begin{equation}
    \begin{split}
        \mathscr{I} = W_1[F_1]\oplus\cdots&\oplus W_{k-1}[F_{k-1}] \\
        &\oplus W_{k+1}[F_{k}] \oplus \cdots \oplus W_{K}[F_{K-1}].
    \end{split}
\end{equation}

The decoding procedure follows directly from the original generalized TSC code. The correctness of the code is obvious, and the download cost $D$ can be simply computed as 
\begin{align}
    p_d^k &\triangleq \sum_{\pi \in \mathcal{P}} p^{k,\pi}_{(\underline{0_{K-1}})}, k\in [1:K],\label{eq:pd}\\
    D & = \max_k \left\{ p^k_d + \frac{N}{N-1}(1-p^k_d)\right\},
\end{align}
where $\underline{0_{K-1}}$ is the length-($K-1$) all-zero vector, and $p_d^k$ is the overall probability of using a direct download by retrieving message $W_k$ from $(N-1)$ servers. 

In our previous studies \cite{qian2022improved,huang2024weakly},  a direct retrieval pattern from a single server was also adopted when the leakage was measured by maximal leakage and mutual information. However, this particular retrieval pattern immediately leads to an unbounded leakage ratio exponent under the DP measure, and we exclude it in this work.


\subsection{The Reduced TSC Code}

\begin{table*}[tb!]
\caption{The Reduced TSC code for $N=3, K=2$}
\label{tab:N3K2reduced}
    \centering
    \begin{subtable}[t]{0.495\textwidth}
        \caption{Retrieval of $W_{1}$}
        \centering
        \resizebox{\linewidth}{!}{
        \renewcommand{\arraystretch}{1.2}
        \begin{tabular}{|ccccccccc|}
        \hline
        \multicolumn{9}{|c|}{Requesting Message $k=1$}                                                                            \\ \hline
        \multicolumn{1}{|c|}{\multirow{2}{*}{Prob.}} & \multicolumn{1}{c|}{\multirow{2}{*}{$F$}} & \multicolumn{1}{c|}{\multirow{2}{*}{$\pi$}}&\multicolumn{2}{c|}{Server 1}                                         & \multicolumn{2}{c|}{Server 2}                                         & \multicolumn{2}{c|}{Server 3}                    \\ \cline{4-9} 
        \multicolumn{1}{|c|}{}                       & \multicolumn{1}{c|}{}      & \multicolumn{1}{c|}{}               & \multicolumn{1}{c|}{$Q_{1}^{[1]}$} & \multicolumn{1}{c|}{$A_{1}$}     & \multicolumn{1}{c|}{$Q_{2}^{[1]}$} & \multicolumn{1}{c|}{$A_{2}$}     & \multicolumn{1}{c|}{$Q_{3}^{[1]}$} & $A_{3}$     \\ \hline\hline
        \multicolumn{1}{|c|}{$p_0$}    & \multicolumn{1}{c|}{$0$}  & \multicolumn{1}{c|}{$(2,1,0)$}                & \multicolumn{1}{c|}{$20$}          & \multicolumn{1}{c|}{$a_2$}       & \multicolumn{1}{c|}{$10$}          & \multicolumn{1}{c|}{$a_1$}       & \multicolumn{1}{c|}{$00$}          & $\emptyset$ \\ \hline
        \multicolumn{1}{|c|}{$p_0$}    & \multicolumn{1}{c|}{$0$}    & \multicolumn{1}{c|}{$(1,0,2)$}              & \multicolumn{1}{c|}{$10$}          & \multicolumn{1}{c|}{$a_1$}       & \multicolumn{1}{c|}{$00$}          & \multicolumn{1}{c|}{$\emptyset$} & \multicolumn{1}{c|}{$20$}          & $a_2$       \\ \hline
        \multicolumn{1}{|c|}{$p_0$}    & \multicolumn{1}{c|}{$0$}    & \multicolumn{1}{c|}{$(0,2,1)$}              & \multicolumn{1}{c|}{$00$}          & \multicolumn{1}{c|}{$\emptyset$} & \multicolumn{1}{c|}{$20$}          & \multicolumn{1}{c|}{$a_2$}       & \multicolumn{1}{c|}{$10$}          & $a_1$       \\ \hline \hline
        
        \multicolumn{1}{|c|}{$p_1$}    & \multicolumn{1}{c|}{$1$}    & \multicolumn{1}{c|}{$(2,1,0)$}              & \multicolumn{1}{c|}{$11$}          & \multicolumn{1}{c|}{$a_1\oplus b_1$}   & \multicolumn{1}{c|}{$01$}          & \multicolumn{1}{c|}{$b_1$}       & \multicolumn{1}{c|}{$21$}          & $a_2\oplus b_1$   \\ \hline
        \multicolumn{1}{|c|}{$p_1$}    & \multicolumn{1}{c|}{$1$}    & \multicolumn{1}{c|}{$(1,0,2)$}              & \multicolumn{1}{c|}{$01$}          & \multicolumn{1}{c|}{$b_1$}       & \multicolumn{1}{c|}{$21$}          & \multicolumn{1}{c|}{$a_2\oplus b_1$}   & \multicolumn{1}{c|}{$11$}          & $a_1\oplus b_1$   \\ \hline
        \multicolumn{1}{|c|}{$p_1$}    & \multicolumn{1}{c|}{$1$}    & \multicolumn{1}{c|}{$(0,2,1)$}              & \multicolumn{1}{c|}{$21$}          & \multicolumn{1}{c|}{$a_2\oplus b_1$}   & \multicolumn{1}{c|}{$11$}          & \multicolumn{1}{c|}{$a_1\oplus b_1$}   & \multicolumn{1}{c|}{$01$}          & $b_1$       \\ \hline
        \multicolumn{1}{|c|}{$p_1$}    & \multicolumn{1}{c|}{$2$}    & \multicolumn{1}{c|}{$(2,1,0)$}              & \multicolumn{1}{c|}{$02$}          & \multicolumn{1}{c|}{$b_2$}       & \multicolumn{1}{c|}{$22$}          & \multicolumn{1}{c|}{$a_2\oplus b_2$}   & \multicolumn{1}{c|}{$12$}          & $a_1\oplus b_2$   \\ \hline
        \multicolumn{1}{|c|}{$p_1$}    & \multicolumn{1}{c|}{$2$}    & \multicolumn{1}{c|}{$(1,0,2)$}              & \multicolumn{1}{c|}{$22$}          & \multicolumn{1}{c|}{$a_2\oplus b_2$}   & \multicolumn{1}{c|}{$12$}          & \multicolumn{1}{c|}{$a_1\oplus b_2$}   & \multicolumn{1}{c|}{$02$}          & $b_2$       \\ \hline
        \multicolumn{1}{|c|}{$p_1$}    & \multicolumn{1}{c|}{$2$}    & \multicolumn{1}{c|}{$(0,2,1)$}              & \multicolumn{1}{c|}{$12$}          & \multicolumn{1}{c|}{$a_1\oplus b_2$}   & \multicolumn{1}{c|}{$02$}          & \multicolumn{1}{c|}{$b_2$}       & \multicolumn{1}{c|}{$22$}          & $a_2\oplus b_2$   \\ \hline
        \end{tabular}
        }
    \end{subtable}
    \begin{subtable}[t]{0.495\textwidth}
        \caption{Retrieval of $W_{2}$}
        \centering
        \resizebox{\linewidth}{!}{
        \renewcommand{\arraystretch}{1.18}
        \begin{tabular}{|ccccccccc|}
        \hline
        \multicolumn{9}{|c|}{Requesting Message $k=2$}                                                                           \\ \hline
        \multicolumn{1}{|c|}{\multirow{2}{*}{Prob.}} & \multicolumn{1}{c|}{\multirow{2}{*}{$F$}} & \multicolumn{1}{c|}{\multirow{2}{*}{$\pi$}}&\multicolumn{2}{c|}{Server 1}                                         & \multicolumn{2}{c|}{Server 2}                                         & \multicolumn{2}{c|}{Server 3}                    \\ \cline{4-9} 
        \multicolumn{1}{|c|}{}                       & \multicolumn{1}{c|}{}      & \multicolumn{1}{c|}{}               & \multicolumn{1}{c|}{$Q_{1}^{[2]}$} & \multicolumn{1}{c|}{$A_{1}$}     & \multicolumn{1}{c|}{$Q_{2}^{[2]}$} & \multicolumn{1}{c|}{$A_{2}$}     & \multicolumn{1}{c|}{$Q_{3}^{[2]}$} & $A_{3}$     \\ \hline\hline
        \multicolumn{1}{|c|}{$p_0$}    & \multicolumn{1}{c|}{$0$}  & \multicolumn{1}{c|}{$(2,1,0)$}                & \multicolumn{1}{c|}{$02$}          & \multicolumn{1}{c|}{$b_2$}       & \multicolumn{1}{c|}{$01$}          & \multicolumn{1}{c|}{$b_1$}       & \multicolumn{1}{c|}{$00$}          & $\emptyset$ \\ \hline
        \multicolumn{1}{|c|}{$p_0$}    & \multicolumn{1}{c|}{$0$}    & \multicolumn{1}{c|}{$(1,0,2)$}              & \multicolumn{1}{c|}{$01$}          & \multicolumn{1}{c|}{$b_1$}       & \multicolumn{1}{c|}{$00$}          & \multicolumn{1}{c|}{$\emptyset$} & \multicolumn{1}{c|}{$02$}          & $b_2$       \\ \hline
        \multicolumn{1}{|c|}{$p_0$}    & \multicolumn{1}{c|}{$0$}    & \multicolumn{1}{c|}{$(0,2,1)$}              & \multicolumn{1}{c|}{$00$}          & \multicolumn{1}{c|}{$\emptyset$} & \multicolumn{1}{c|}{$02$}          & \multicolumn{1}{c|}{$b_2$}       & \multicolumn{1}{c|}{$01$}          & $b_1$       \\ \hline \hline
        
        \multicolumn{1}{|c|}{$p_1$}    & \multicolumn{1}{c|}{$1$}    & \multicolumn{1}{c|}{$(2,1,0)$}              & \multicolumn{1}{c|}{$11$}          & \multicolumn{1}{c|}{$a_1\oplus b_1$}   & \multicolumn{1}{c|}{$10$}          & \multicolumn{1}{c|}{$a_1$}       & \multicolumn{1}{c|}{$12$}          & $a_1\oplus b_2$   \\ \hline
        \multicolumn{1}{|c|}{$p_1$}    & \multicolumn{1}{c|}{$1$}    & \multicolumn{1}{c|}{$(1,0,2)$}              & \multicolumn{1}{c|}{$10$}          & \multicolumn{1}{c|}{$a_1$}       & \multicolumn{1}{c|}{$12$}          & \multicolumn{1}{c|}{$a_1\oplus b_2$}   & \multicolumn{1}{c|}{$11$}          & $a_1\oplus b_1$   \\ \hline
        \multicolumn{1}{|c|}{$p_1$}    & \multicolumn{1}{c|}{$1$}    & \multicolumn{1}{c|}{$(0,2,1)$}              & \multicolumn{1}{c|}{$12$}          & \multicolumn{1}{c|}{$a_1\oplus b_2$}   & \multicolumn{1}{c|}{$11$}          & \multicolumn{1}{c|}{$a_1\oplus b_1$}   & \multicolumn{1}{c|}{$10$}          & $a_1$       \\ \hline
        \multicolumn{1}{|c|}{$p_1$}    & \multicolumn{1}{c|}{$2$}    & \multicolumn{1}{c|}{$(2,1,0)$}              & \multicolumn{1}{c|}{$20$}          & \multicolumn{1}{c|}{$a_2$}       & \multicolumn{1}{c|}{$22$}          & \multicolumn{1}{c|}{$a_2\oplus b_2$}   & \multicolumn{1}{c|}{$21$}          & $a_2\oplus b_1$   \\ \hline
        \multicolumn{1}{|c|}{$p_1$}    & \multicolumn{1}{c|}{$2$}    & \multicolumn{1}{c|}{$(1,0,2)$}              & \multicolumn{1}{c|}{$22$}          & \multicolumn{1}{c|}{$a_2\oplus b_2$}   & \multicolumn{1}{c|}{$21$}          & \multicolumn{1}{c|}{$a_2\oplus b_1$}   & \multicolumn{1}{c|}{$20$}          & $a_2$       \\ \hline
        \multicolumn{1}{|c|}{$p_1$}    & \multicolumn{1}{c|}{$2$}    & \multicolumn{1}{c|}{$(0,2,1)$}              & \multicolumn{1}{c|}{$21$}          & \multicolumn{1}{c|}{$a_2\oplus b_1$}   & \multicolumn{1}{c|}{$20$}          & \multicolumn{1}{c|}{$a_2$}       & \multicolumn{1}{c|}{$22$}          & $a_2\oplus b_2$   \\ \hline
        \end{tabular}}
    \end{subtable}
\end{table*}

A simpler scheme can in fact be as good as the general TSC code in some cases. This reduced version is later used to establish the optimal probability allocation for L-PIR, which is shown to be optimal under the $\epsilon$-DP measure given in Section \ref{sec:opt-reduced-gTSC}. In this reduced version, we set the probability as follows
\begin{align}
    \mathbb{P}_{k}(F^*) = \begin{cases} 
    p_{j}, & F^* = (F,\pi)\in \mathcal{F}_j \times \mathcal{P}_c\\
    0, & \text{otherwise}
    \end{cases},\label{eqn:reduced}
\end{align}
where $\mathcal{F}_j \triangleq \{ F\in \mathcal{F}: \|F\|=j \}$ is the set that the Hamming weight of $F$ equals to $j$ and $\mathcal{P}_c \triangleq \{ \pi \in \mathcal{P}: \pi(n+1)= \left( \pi(n)+1 \right)_N, \forall n\in [1:N] \}$ is the set that $\pi$ is \emph{cyclic}. In other words, only cyclic permutations are allowed, instead of the entire set of permutations; moreover, random vectors $F$ with the same Hamming weight are assigned the same probability. Note that this reduced TSC code is \emph{symmetric}. 


An example of the reduced code scheme (with adjusted probabilities) is given in Table \ref{tab:N3K2reduced}.

\subsection{The L-PIR Code}

The L-PIR code was proposed in \cite{Samy2019}, which is essentially a special case of the reduced TSC code discussed above, with a specific assignment of the probability allocation. More specifically, the probability of the minimum download pattern, i.e., the direct download probability $p_0$, is chosen to be
\begin{align}
    p_0=\frac{e^{\epsilon}}{Ne^{\epsilon}+N^{K}-N},
\end{align}
while the other high-cost patterns are assigned with equal probability $p_1$. The download cost achieved is
\begin{align}
    D^{\text{UB}}(\epsilon)=1+\frac{N^{K-1}-1}{\left( N-1 \right) \left( e^{\epsilon}+N^{K-1}-1 \right)}. \label{eqn:Dub}
\end{align}
We refer to the performance with this particular assignment as the performance upper bound (UB) in the sequel.

This assignment makes intuitive sense, as it increases the probability of the retrieval pattern with the lowest download cost at the expense of privacy leakage, while keeping the other retrieval patterns of equal probability since their download costs are the same. However, a key observation we make is that the patterns cannot be completely decoupled, and the optimal probability assignment is much more sophisticated. 

A lower bound (LB) was also derived in \cite{Samy2021}: 
\begin{align}
    D^{\text{LB}}(\epsilon) = 1 + \frac{1}{Ne^{\epsilon}} + \cdots + \frac{1}{(Ne^{\epsilon})^{K-1}}. \label{eqn:LB}
\end{align}
It was shown that the gap between the upper bound and the lower bound can be bounded by a (small) multiplicative constant depending on $N$ and $\epsilon$. As we shall show shortly, the gap in the leakage ratio exponents can still be quite large.

\section{Main Result}
We summarize the main result of L-PIR with $\epsilon$ -DP in the following theorems. 

\begin{theorem}
\label{thrm:epsbound}
With a fixed number of servers and a fixed feasible download cost, the leakage ratio exponent $\epsilon$'s are bounded as
\begin{align}
    \epsilon^{\text{TSC}} &\leq \log \left( K-1\right)  + \log \frac{N-1}{\alpha}, \\
    \epsilon^{\text{UB}} &\leq (K-1)\log N + \log \frac{1-\alpha}{\alpha}, \\
    \epsilon^{\text{UB}} &\geq (K-2)\log N + \log \frac{1-\alpha}{\alpha},
\end{align}
where $\alpha := (D-1)(N-1)$. 
As a direct consequence, we have $\epsilon^{\text{UB}}=\Theta\left(K\right)$, while $\epsilon^{\text{TSC}}=O(\log K)$.
\end{theorem}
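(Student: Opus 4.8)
The plan is to treat the three displayed bounds one at a time and then read off the asymptotics. For $\epsilon^{\text{UB}}$ everything is explicit. Starting from the closed form in~\eqref{eqn:Dub} and writing $\alpha=(D-1)(N-1)$, I would solve for $e^{\epsilon}$: the relation $D^{\text{UB}}(\epsilon)=1+\frac{N^{K-1}-1}{(N-1)(e^{\epsilon}+N^{K-1}-1)}$ rearranges to $\alpha\big(e^{\epsilon}+N^{K-1}-1\big)=N^{K-1}-1$, hence $e^{\epsilon}=\frac{(1-\alpha)(N^{K-1}-1)}{\alpha}$ and $\epsilon^{\text{UB}}=\log(N^{K-1}-1)+\log\frac{1-\alpha}{\alpha}$ \emph{exactly}. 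The two inequalities then come from sandwiching the middle factor: $N^{K-1}-1\le N^{K-1}$ gives the upper bound, and $N^{K-1}-1\ge N^{K-2}$ — valid because $N^{K-2}(N-1)\ge 1$ for every $N\ge 2$, $K\ge 2$ — gives the lower bound, after taking logarithms.

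For $\epsilon^{\text{TSC}}$ I would exhibit one reduced TSC code of download cost $D$ and compute its leakage. The structural fact I would use (from Section~\ref{sec:opt-reduced-gTSC}) is that a query received at a server determines the interference vector up to which coordinate carries the requested symbol, so for the symmetric reduced code the only likelihood ratios occurring in~\eqref{eq:DP-def} are of the form $p_j/p_{j+1}$ or $p_{j+1}/p_j$; thus its leakage ratio exponent is $\log\max_{0\le j\le K-2}\max\{p_j/p_{j+1},\,p_{j+1}/p_j\}$. Take the layered (geometric) allocation $p_j\propto r^{\,j}$, $0\le j\le K-1$; fixing the download cost amounts to fixing $p_d=Np_0=1-\alpha$, so the normalization $N\sum_{j=0}^{K-1}\binom{K-1}{j}(N-1)^j p_j=1$ forces $(1+(N-1)r)^{K-1}=(1-\alpha)^{-1}$, i.e. $r=\frac{(1-\alpha)^{-1/(K-1)}-1}{N-1}>0$, and the leakage of this code is exactly $\log(1/r)=\log\frac{N-1}{(1-\alpha)^{-1/(K-1)}-1}$. (A quick sanity check: at $\alpha=1-N^{1-K}$, the canonical download cost, this equals $\log 1=0$.) Hence $\epsilon^{\text{TSC}}\le\log\frac{N-1}{(1-\alpha)^{-1/(K-1)}-1}$, and Section~\ref{sec:opt-reduced-gTSC} shows this is in fact tight.

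The remaining step, and the only one requiring analysis, is a lower bound on $(1-\alpha)^{-1/(K-1)}-1$: using $\ln(1-\alpha)\le-\alpha$ and $e^{x}\ge 1+x$ one gets $(1-\alpha)^{-1/(K-1)}=\exp\!\big(-\tfrac{1}{K-1}\ln(1-\alpha)\big)\ge e^{\alpha/(K-1)}\ge 1+\frac{\alpha}{K-1}$, so $\epsilon^{\text{TSC}}\le\log\frac{(N-1)(K-1)}{\alpha}=\log(K-1)+\log\frac{N-1}{\alpha}$, as claimed. For the asymptotic consequence, fix $N$ and a feasible $D$, so $\alpha$ is a positive constant below $1$ (and stays feasible once $K$ is large enough that $1-N^{1-K}>\alpha$); then $\log\frac{1-\alpha}{\alpha}=O(1)$, the $\epsilon^{\text{UB}}$ bounds read $(K-2)\log N+O(1)\le \epsilon^{\text{UB}}\le (K-1)\log N+O(1)$, which is $\Theta(K)$ since $\log N>0$, while $\epsilon^{\text{TSC}}\le\log(K-1)+O(1)=O(\log K)$. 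I expect the real weight of the argument to sit in the part I am quoting — establishing via the duality conditions of Section~\ref{sec:opt-reduced-gTSC} that the layered geometric allocation is simultaneously optimal among all generalized TSC codes; once that (or merely its achievability, which is all the upper bounds above actually use) is in hand, everything here is bookkeeping plus the single convexity inequality on $(1-\alpha)^{-1/(K-1)}$.
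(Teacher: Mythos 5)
Your proposal is correct and follows essentially the same route as the paper: both arguments start from the closed-form expressions for $D^{\text{TSC}}(\epsilon)$ (Theorem~\ref{thrm:allocation-DP-PIR}) and $D^{\text{UB}}(\epsilon)$ in~\eqref{eqn:Dub}, invert to express $e^\epsilon$ in terms of $\alpha$, and then apply a first-order convexity bound for the TSC case and the elementary sandwich $N^{K-2}\le N^{K-1}-1\le N^{K-1}$ for the UB case. The only cosmetic difference is the specific inequality used to handle the $K{-}1$ power: you bound $(1-\alpha)^{-1/(K-1)}\ge 1+\alpha/(K-1)$ via $\ln(1-\alpha)\le-\alpha$ and $e^x\ge 1+x$, while the paper applies Bernoulli's inequality $(1-x)^{K-1}\ge 1-(K-1)x$ directly to the rearranged expression $\alpha=1-\bigl(1-\tfrac{N-1}{e^\epsilon+N-1}\bigr)^{K-1}$; both yield the identical final bound $e^{\epsilon^{\text{TSC}}}\le\frac{(K-1)(N-1)}{\alpha}$.
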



The optimized allocation of TSC probability in our work is able to achieve $O(\log K)$ leakage, which is a significant order-wise improvement over $\Theta\left(K \right)$.
Theorem \ref{thrm:epsbound} is obtained by  optimizing the probability allocation of the TSC code, which is established in Theorem \ref{thrm:allocation-DP-PIR} as follows.

\begin{theorem}
\label{thrm:allocation-DP-PIR}
The optimal probability allocation for the TSC code with $\epsilon$-DP is given by the reduced scheme
\begin{align*}
    p_j = \frac{e^{(K-1-j)\epsilon }}{N\left( e^{\epsilon }+N-1\right)^{K-1}  } , ~\forall j\in [0:K-1].
\end{align*}
The corresponding optimal download cost is 
\begin{align}
    D^{\text{TSC}}\left( \epsilon \right) =1+\frac{\left( e^{\epsilon }+N-1\right)^{K-1}  -e^{(K-1)\epsilon }}{(N-1)\left( e^{\epsilon }+N-1\right)^{K-1} } . \label{eqn:optD}
\end{align}
\end{theorem}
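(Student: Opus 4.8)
The plan is to cast the problem as a linear program, collapse it by symmetry to a handful of variables, and then read off the optimum from the complementary-slackness conditions. First, from $D=\max_k\{p_d^k+\frac{N}{N-1}(1-p_d^k)\}$ one gets $D=\frac{N}{N-1}-\frac{1}{N-1}\min_k p_d^k$, an affine strictly decreasing function of $\min_k p_d^k$; hence minimizing the download cost over all valid TSC allocations $\{p^{k,\pi}_{(f)}\}$ is the same as maximizing $\min_k p_d^k$. Second, the answer $A_n^{(q)}=\bigoplus_{m}W_m[q_m]$ is a deterministic function of the query vector $q$ and $W_{1:K}$ that does \emph{not} depend on the requested index $k$, so $\mathbb{P}(Q_n^{[k]}=q,A_n^{[k]}=a\mid W_{1:K})=\mathbb{P}(Q_n^{[k]}=q)\,\mathbf{1}\{a=\bigoplus_m W_m[q_m]\}$, and the $\epsilon$-DP requirement \eqref{eq:DP-def} reduces to the linear inequalities $\mathbb{P}(Q_n^{[k_1]}=q)\le e^{\epsilon}\,\mathbb{P}(Q_n^{[k_2]}=q)$ on the query marginals. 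The feasible set is therefore a polytope and we have a linear program.

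The second step is the reduction. The construction \eqref{eqn:tsc-phi} is invariant — in both objective and the $\epsilon$-DP polytope — under relabeling of the $N$ servers, which acts on $\pi$, and under relabeling of the $K$ message indices; averaging any feasible allocation over this symmetry group stays feasible (the polytope is convex), equalizes the $p_d^k$, and cannot decrease $\min_k p_d^k$ since $\min_k p_d^k\le\frac1K\sum_k p_d^k$. A symmetric allocation is precisely a \emph{reduced} TSC scheme: one may restrict to cyclic permutations and let the probability of $F^*=(f,\pi)$ depend on $f$ only through $\|f\|$, leaving the variables $p_0,\dots,p_{K-1}$. In this form the normalization becomes $N\sum_{j=0}^{K-1}\binom{K-1}{j}(N-1)^j p_j=1$ (there are $\binom{K-1}{j}(N-1)^j$ vectors of weight $j$ and $N$ cyclic permutations), and $p_d^k=Np_0$ by \eqref{eq:pd}. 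The key computation is translating the privacy constraints: for a query vector $q$, \eqref{eqn:tsc-phi} forces $F$ to equal $q$ with its $k$-th coordinate deleted, together with the unique cyclic $\pi$ with $\pi(n)=(\sum_m q_m)_N$, so $\mathbb{P}(Q_n^{[k]}=q)=p_w$ where $w$ is the number of nonzero entries of $q$ outside coordinate $k$. Taking $q$ with $q_{k_1}\neq 0$, $q_{k_2}=0$, $\|q\|=m$ turns the inequality into $p_{m-1}\le e^{\epsilon}p_m$; letting $m$ range over $[1:K-1]$ and swapping $k_1\leftrightarrow k_2$ yields the full constraint set
\[
e^{-\epsilon}p_{j+1}\le p_j\le e^{\epsilon}p_{j+1},\qquad j\in[0:K-2].
\]

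The third step solves the reduced LP: maximize $Np_0$ subject to $p_j\ge 0$, the ladder inequalities, and the normalization. Iterating the lower bounds gives $p_j\ge e^{-j\epsilon}p_0$, and since each $p_j$ appears with a nonnegative coefficient in the normalization, $p_0$ is maximized by taking $p_j=e^{-j\epsilon}p_0$; this is feasible because then $p_j/p_{j+1}=e^{\epsilon}$ meets the upper bound with equality while the lower bounds hold for $\epsilon\ge 0$. Substituting into the normalization and using $\sum_j\binom{K-1}{j}\bigl((N-1)e^{-\epsilon}\bigr)^{j}=\bigl(1+(N-1)e^{-\epsilon}\bigr)^{K-1}=e^{-(K-1)\epsilon}(e^{\epsilon}+N-1)^{K-1}$ gives $p_0=\frac{e^{(K-1)\epsilon}}{N(e^{\epsilon}+N-1)^{K-1}}$, hence $p_j=\frac{e^{(K-1-j)\epsilon}}{N(e^{\epsilon}+N-1)^{K-1}}$, and $D^{\text{TSC}}(\epsilon)=\frac{N}{N-1}-\frac{Np_0}{N-1}$ rearranges to \eqref{eqn:optD}. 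To turn the monotonicity heuristic into a rigorous optimality proof I would exhibit the LP dual — attach multipliers to the $K-1$ lower-bound constraints $p_j\ge e^{-\epsilon}p_{j+1}$ and to the normalization — and verify dual feasibility together with complementary slackness, the point being that at the candidate solution all upper-bound constraints and all sign constraints are inactive.

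The main obstacle is the reduction: establishing rigorously that arbitrary TSC allocations with arbitrary permutations can be replaced, with no loss, by the weight-indexed cyclic scheme, and — the part that really needs care — extracting the clean ladder inequalities $e^{-\epsilon}p_{j+1}\le p_j\le e^{\epsilon}p_{j+1}$ from \eqref{eq:DP-def}. Both hinge on two structural facts: the answer is $k$-independent given the query, and a query pins down the underlying key. Once the problem is in reduced-LP form the optimization is transparent, and the duality verification flagged in the introduction is essentially bookkeeping, its only delicate point being the confirmation that every upper-bound privacy constraint is slack at the optimum.
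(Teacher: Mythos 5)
Your plan follows the same two-step skeleton as the paper: first reduce the search to the symmetric (reduced) scheme, then solve the resulting small linear program over $p_0,\dots,p_{K-1}$. Within that skeleton, however, the two steps diverge in interesting ways.

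On the LP step you take a genuinely different and arguably cleaner route. The paper writes out the Lagrangian, proposes explicit dual variables $\lambda,\alpha_j,\beta_j,\mu_j$, and verifies the full KKT conditions. Your monotonicity argument is more elementary and, despite your hedge at the end, already rigorous: iterating $p_j\le e^\epsilon p_{j+1}$ gives $p_j\ge e^{-j\epsilon}p_0$ for every feasible point, so $1=N\sum_j s_j p_j\ge Np_0(1+(N-1)e^{-\epsilon})^{K-1}$ bounds $p_0$ from above, and the candidate $p_j=e^{-j\epsilon}p_0$ is feasible (forward ladder tight, backward ladder slack since $\epsilon\ge 0$) and attains the bound. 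No dual needs to be exhibited. Your translation of the DP constraint into the clean ladder $e^{-\epsilon}p_{j+1}\le p_j\le e^{\epsilon}p_{j+1}$ inside the reduced scheme, by choosing $q$ with $q_{k_1}\ne 0$, $q_{k_2}=0$, is also correct and tidy.

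The gap is in the reduction, and it is precisely the part you flag as the main obstacle. Two things you state as facts need proofs. First, you assert that the DP polytope of the unreduced problem is invariant under relabeling servers and message indices; this requires showing that the group action carries each constraint $\sum_{\pi:\phi^*_n(k_1,(f,\pi))=q}p^{k_1,\pi}_{(f)}\le e^\epsilon\sum_{\pi:\phi^*_n(k_2,(f,\pi))=q}p^{k_2,\pi}_{(f)}$ into another constraint of the same family, and the message-relabeling case is not trivial because relabeling $k$ also permutes the coordinates of $q$ in \eqref{eqn:tsc-phi}. Second, and more substantively, you claim "a symmetric allocation is precisely a reduced TSC scheme." Averaging over server relabelings spreads mass uniformly over \emph{all} $N!$ permutations, not only the $N$ cyclic ones, so the averaged allocation is not literally of the form \eqref{eqn:reduced}; one must still argue that it induces the same query marginals as a cyclic scheme with appropriately rescaled $p_j$, or go the route the paper takes: define $p_j=\frac{1}{NKs_j}\sum_{k,\pi}\sum_{f\in\mathcal{F}_j}p^{k,\pi}_{(f)}$ directly from an arbitrary optimal allocation and then verify, by a careful count over $k,n,q$ at each Hamming weight, that these $p_j$ satisfy the ladder and that the objective does not increase. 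That counting argument (the derivation culminating in $\frac{j}{K}t_j p_{j-1}+\frac{K-j}{K}t_j p_j$ and the sandwich by $\min/\max$) is the technical heart of the paper's Proposition \ref{prop:P1P2}; it is the piece your proposal would still owe, since your ladder derivation is carried out only \emph{inside} the already-reduced scheme and so cannot by itself certify that the reduction is lossless.
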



Theorem \ref{thrm:allocation-DP-PIR} implies that without loss of optimality, we only need the cyclic permutations of the TSC code, instead of all possible permutations between servers. Unlike \emph{only} biasing the lowest download path $p_0$ in \cite{Samy2021}, we observed a layered structure in the optimal probability allocation of the TSC code with $\epsilon$-DP: the probability of retrieving the message using a random key with lower Hamming weight is assigned a higher value. More specifically, the probability ratio for using random keys from $\mathcal{F}_{j-1}$ to $\mathcal{F}_{j}$ is exactly $e^{\epsilon}$. Fig. \ref{fig:DP-d} shows the download cost with fixed leakage ratio exponents and multiplicative gaps between TSC, UB, and LB. It can be seen that the gap between UB and TSC is striking when $N$ and $K$ are large.


\begin{figure}[t!]
    \centering
    \includegraphics[width=\linewidth]{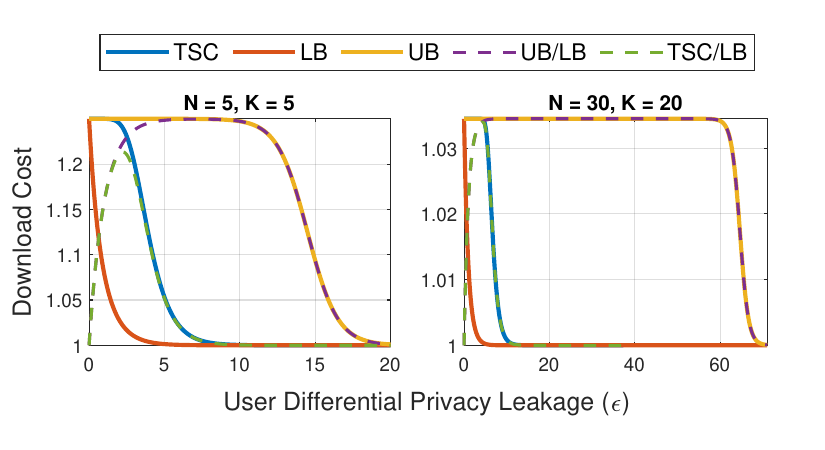}
    \caption{The download cost $D$ and multiplicative gap between TSC, UB, and LB when user privacy leakage $\epsilon$ is fixed.}
    \label{fig:DP-d}
    \vspace{-0.3cm}
\end{figure}

\section{Proof of Theorem \ref{thrm:epsbound}}
We first define $\alpha:=(D-1)(N-1)$ and assume $N, K\geq 2$. Given the LB in \eqref{eqn:LB}, we assume $D>1$ since $D=1$ if and only if there is no user privacy ($\epsilon \rightarrow \infty)$.

With Theorem \ref{thrm:allocation-DP-PIR}, we can rewrite \eqref{eqn:optD} as
\begin{align}
    \alpha &=1-\left( 1-\frac{N-1}{e^{\epsilon^{\text{TSC}} }+N-1} \right)^{K-1} \\
    &\leq 1-\left( 1-(K-1)\frac{(N-1)}{e^{\epsilon^{\text{TSC}} }+N-1} \right) \\
    &=\frac{(K-1)(N-1)}{e^{\epsilon^{\text{TSC}} }+N-1} ,
\end{align}
where the inequality follows the fact that $(1-x)^{K-1} \geq 1-(K-1)x$. 
Hence, $\epsilon^{\text{TSC}}$ can be bounded by
\begin{align}
    e^{\epsilon^{\text{TSC}} } \leq \frac{(K-1)(N-1)}{\alpha } - (N-1) \leq \frac{K-1}{D-1},
\end{align}
which implies that
\begin{align}
    \epsilon^{\text{TSC}} \leq \log \left( K-1\right)  +\log \frac{N-1}{\alpha}.
\end{align}

In a similar manner, we can rewrite \eqref{eqn:Dub} as follows:
\begin{align}
    \alpha=1-\frac{e^{\epsilon^{\text{UB} } }}{e^{\epsilon^{\text{UB} } }+N^{K-1}-1} , \quad \epsilon^{\text{UB}} \geq 0. \label{eqn:alpUB}
\end{align}
Note that $0< \alpha<1$. Hence, we have
\begin{align}
    e^{\epsilon^{\text{UB} } }=\frac{1-\alpha }{\alpha } \left( N^{K-1}-1\right),
\end{align}
which implies that
\begin{align}
    \frac{1-\alpha }{\alpha } N^{K-2} \leq e^{\epsilon^{\text{UB} } } \leq \frac{1-\alpha }{\alpha } N^{K-1}.
\end{align}
Hence, $\epsilon^{\text{UB}}$ can be bounded by
\begin{align}
    &\epsilon^{\text{UB} } \leq \left( K-1\right)  \log N+\log \frac{1-\alpha }{\alpha }, \\
    &\epsilon^{\text{UB} } \geq \left( K-2\right)  \log N+\log \frac{1-\alpha }{\alpha }.
\end{align}

\section{Proof of Theorem \ref{thrm:allocation-DP-PIR}}

The proof of Theorem \ref{thrm:allocation-DP-PIR} is decomposed into two steps: A) we first establish that we can restrict the probability allocation to the reduced TSC code without loss of optimality; B) we derive the optimal probability allocation under this reduced TSC code by carefully solving the KKT conditions.

\subsection{Optimality of the Reduced Scheme}
\label{sec:opt-reduced-gTSC}

Let's consider the L-PIR with $\epsilon$-DP, which we shall refer to as problem $(P1)$: 
\begin{equation}
\begin{aligned}
\min_{p_{(f)}^{k,\pi}} \quad & \max_{k}\left\{ D = p_d^k +\frac{N}{N-1} \left( 1-p_d^k \right)\right\} \\
\textrm{s.t.} \quad 
  & \sum_{\pi: \phi^*_n(k_1,(f,\pi))=q}p^{k_1,\pi}_{(f)} \leq \sum_{\pi :\phi^*_n(k_2,(f,\pi))=q}p^{k_2,\pi}_{(f)}e^{\epsilon},  \\
  & \qquad\qquad\qquad\qquad\qquad\qquad \forall k_1,k_2,n,q, \\
  & p_{(f)}^{k,\pi} \geq 0, \quad \forall k,\pi,f, \\
  & \sum_{f}\sum_{\pi}p^{k,\pi}_{(f)}=1, \quad \forall k.
\end{aligned}
\end{equation}
Recall that $p^{k,\pi}_{(f)}$ is the probability of query for the $k^{\text{th}}$ message under the random key $(f,\pi)$, and $p_d^k \triangleq \sum_{\pi \in \mathcal{P}} p^{k,\pi}_{(\underline{0_{K-1}})}$. For simplicity, we will write $p^{k,\pi}_{(\underline{0_{K-1}})}$ as $p^{k,\pi}_{(0)}$ in the sequel.

We first show that the optimal value of the optimization problem $({P1})$ above, which is achieved under the optimal probability distribution in the generalized TSC code, is the same as the optimal value of the optimization problem $({P2})$ below, which is achieved by the optimal distribution allocated to the reduced TSC code. 
\begin{equation}
\begin{aligned}
\min_{p_{0:K-1}} \quad & \frac{N}{N-1} \left( 1-p_0 \right) \\
\textrm{s.t.} \quad 
  & p_j-e^{\epsilon} p_{j+1}\leq 0,\quad \forall j\in [0:K-2],  \\
  & p_j-e^{\epsilon} p_{j-1}\leq 0,\quad \forall j\in [1:K-1], \\
  & p_0, p_1,\ldots,p_{K-1} \geq 0, \\
  & \sum_{j=0}^{K-1}Ns_j p_j=1.
\end{aligned}
\end{equation}

\begin{prop}
\label{prop:P1P2}
    $(P1)=(P2).$
\end{prop}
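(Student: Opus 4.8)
\textbf{Proof plan for Proposition \ref{prop:P1P2}.}

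The strategy is a two-sided inequality. First I would show $(P1) \le (P2)$, which is the easy direction: any feasible point of $(P2)$ lifts to a feasible point of $(P1)$ achieving the same objective. Given $p_{0:K-1}$ feasible for $(P2)$, define $p^{k,\pi}_{(f)} = p_{\|f\|}$ when $\pi$ is cyclic and $0$ otherwise, for every $k$. The normalization $\sum_j N s_j p_j = 1$ (where $s_j = \binom{K-1}{j}(N-1)^j$ counts vectors of Hamming weight $j$ and there are $N$ cyclic permutations) matches $\sum_f \sum_\pi p^{k,\pi}_{(f)} = 1$. The DP constraints of $(P1)$ must then be checked: for a fixed $k_1,k_2,n,q$, the set of cyclic $\pi$ with $\phi^*_n(k_1,(f,\pi)) = q$ pins down $\pi$ and $f$ almost entirely, with the query structure in \eqref{eqn:tsc-phi} showing that the Hamming weights involved on the two sides differ by at most one; the two-sided constraints $p_j \le e^\epsilon p_{j\pm 1}$ are exactly what is needed. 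The objective values coincide since $p^k_d = \sum_\pi p^{k,\pi}_{(0)} = N p_0 \cdot (\text{something})$... more precisely $p_0$ here should be taken as the aggregate direct-download probability so that $D = \frac{N}{N-1}(1-p_0)$ lines up with $p^k_d + \frac{N}{N-1}(1-p^k_d)$; I would reconcile the normalization constants carefully at this point.

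The substantive direction is $(P1) \ge (P2)$: every feasible point of $(P1)$ can be converted, without increasing the objective, into one that is symmetric in the sense of the reduced code, hence feasible for $(P2)$. I would do this by a symmetrization/averaging argument in two stages. Stage one: average over the cyclic group acting simultaneously on the server labels and on $\pi$. Because the canonical TSC code and its permutation-generalized form are equivariant under relabeling servers by a cyclic shift, this averaging maps feasible points to feasible points (the DP constraints are permuted among themselves), preserves the objective (the download cost depends only on $p^k_d$, which is invariant), and by convexity of the feasible set the averaged point is still feasible. Stage two: average over the message index $k$ and over the symmetric group permuting the coordinates of $F$; again equivariance of the query map \eqref{eqn:tsc-phi} under permuting messages, together with convexity, gives a feasible symmetric point with the same or smaller objective. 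After both averagings, the probability $p^{k,\pi}_{(f)}$ depends only on $\|f\|$ and on whether $\pi$ is cyclic — but I also need to argue that probability placed on non-cyclic $\pi$ can be moved onto cyclic ones without hurting: this follows because the non-cyclic part only contributes to the DP constraints and the normalization, never improving the objective, so setting it to zero (and rescaling) can only relax things — this needs a short check that rescaling keeps DP constraints satisfied, which it does since they are homogeneous.

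The main obstacle I anticipate is Stage one/two of the symmetrization: one must verify precisely that the group actions on the index set $(k,\pi,f)$ permute the DP constraints of $(P1)$ among themselves (so that the averaged point is feasible, using convexity of the constraint region), and that the worst-case-over-$k$ objective does not increase under averaging over $k$ — the latter is the standard fact that $\max_k$ of an average is at most the average of the $\max_k$, combined with the objective becoming $k$-independent after symmetrizing. A secondary subtlety is bookkeeping the combinatorial counts $s_j$ and the factor $N$ so that the reduced normalization constraint $\sum_j N s_j p_j = 1$ emerges exactly, and confirming that under the query map the DP inequality between messages $k_1$ and $k_2$ collapses to precisely the two-sided neighbor constraints $p_j - e^\epsilon p_{j\pm1} \le 0$ rather than something weaker or stronger. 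Once symmetry is established, the reduction of the constraint list is essentially a direct computation from \eqref{eqn:tsc-phi}, and the equality of optimal values follows.
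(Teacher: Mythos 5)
Your decomposition into two directions is right, and the easy direction $(P1)\le(P2)$ matches the paper (the reduced scheme is a special TSC assignment). But for the substantive direction $(P1)\ge(P2)$, your symmetrization strategy has two genuine gaps, and it is not the route the paper takes.

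First, the group you propose to average over --- cyclic shifts of servers (with $\pi$), permutations of the message index $k$, and permutations of the coordinates of $F$ --- does not force $p^{k,\pi}_{(f)}$ to depend only on the Hamming weight $\|f\|$. Coordinate permutations make, e.g., $(1,0)$ and $(0,1)$ equivalent, but they do not identify $(1,0)$ with $(2,0)$, which have the same weight but different non-zero entries. To collapse within a Hamming-weight class you would also need a symmetry permuting the non-zero alphabet values, and because the $k$-th query symbol is $(\pi(n)-\sum_j F_j)_N$ this would have to commute with modular sums, which fails for general $N$. So your ``stage two'' does not land you in the reduced parameterization.

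Second, the claim that mass on non-cyclic $\pi$ can be discarded ``because the DP constraints are homogeneous'' is not a valid justification: zeroing out and rescaling changes the two sides of each DP constraint by different factors, since different $(k,q,n)$ select different subsets of $\pi$. The correct observation (which would repair this step, but needs to be made explicitly) is that after cyclic averaging the per-server marginal $\sum_{\pi:\pi(n)=v}p^{k,\pi}_{(f)}$ becomes $\frac{1}{N}\sum_{\pi}p^{k,\pi}_{(f)}$, independent of $n$ and $v$; the DP constraints then depend only on the orbit-aggregated masses, so probability can be moved onto the cyclic orbit without affecting feasibility or the objective. You gesture at this but the stated reason is wrong.

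The paper's proof avoids both of these difficulties by never trying to produce a symmetric $(P1)$-feasible point at all. Given an arbitrary $(P1)$-optimal solution, it directly defines $p_j$ as the average of $p^{k,\pi}_{(f)}$ over all $k$, $\pi$, and $f$ with $\|f\|=j$, and then verifies the $(P2)$ constraints by summing the pointwise $(P1)$ DP inequalities over all queries $q$ with $\|q\|=j$, all servers $n$, and all pairs $(k_1,k_2)$; the combinatorics of how many times each $(k,f)$ arises produce exactly the two-sided neighbor inequalities $p_{j-1}\le e^{\epsilon}p_j$ and $p_j\le e^{\epsilon}p_{j-1}$. Finally, convexity of $\max_k$ gives that the $(P2)$ objective at these $p_j$'s is at most the $(P1)$ value. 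This aggregation argument is weaker than symmetrization (it does not establish feasibility of the averaged point in $(P1)$), which is precisely why it succeeds: $(P2)$ only imposes the aggregated constraints, so that is all one needs to check.

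If you want to rescue your route, you would have to enlarge the symmetry group (and prove the constraints are permuted rather than merely preserved), or abandon the attempt to stay $(P1)$-feasible and instead argue, as the paper does, only about the images of the $(P1)$ constraints in the space of $p_j$'s.
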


\begin{proof}
    See Appendix \ref{apx:proof-prop-P1P2}
\end{proof}

\subsection{Optimal Allocation of the Reduced Code}
\label{sec:allocate-reduced-gTSC}

The Lagrangian function of problem $(P2)$ is 
\begin{align}
    \mathscr{L}=& \frac{N}{N-1} (1-p_{0})\  -\sum^{K-1}_{i=0} \mu_{j} p_{j}+\lambda \left( \sum^{K-1}_{j=0} Ns_{j}p_{j}-1\right)  \notag \\
    & +\sum^{K-2}_{i=0} \alpha_{j} \left( p_{j}-e^{\epsilon }p_{j+1}\right)  +\sum^{K-1}_{j=1} \beta_{j} \left( p_{j}-e^{\epsilon }p_{j-1}\right)  
\end{align}

Then the KKT condition can be derived as follows:
\begin{enumerate}
    \item stationarity:
    \begin{align}
        \begin{cases} -\frac{N}{N-1} -\mu_{0} +\lambda Ns_0+\alpha_{0} -e^{\epsilon }\beta_{1}  =0\\ 
        -\mu_{j} +\lambda Ns_{j}+\alpha_{j} - e^{\epsilon } \alpha_{j-1} +\beta_{j} - e^{\epsilon } \beta_{j+1}  = 0, \\
        \qquad\qquad\qquad\qquad\qquad \forall j\in [1:K-2]\\ 
        -\mu_{K-1} +\lambda Ns_{K-1}- e^{\epsilon } \alpha_{K-2} +\beta_{K-1} =\  0 \end{cases} 
    \end{align}

    \item primal feasibility:
    \begin{align}
        \begin{cases}
        N\sum^{K-1}_{j=0} s_{j}p_{j}-1=0\\ 
        p_{j}\geq 0, ~\forall j\in [0:K-1]\\ 
        p_j-e^{\epsilon} p_{j+1} \leq 0 ,~ \forall j\in [0:K-2] \\
        p_j-e^{\epsilon} p_{j-1} \leq 0 ,~ \forall j\in [1:K-1] \\
        \end{cases} 
    \end{align}

    \item dual feasibility:
    \begin{align}
        \begin{cases}\mu_{j } \geq 0 , ~ \forall j\in [0:K-1] \\
        \alpha_{j } \geq 0, \beta_{j+1} \geq 0 ,~ \forall j\in [0:K-2] \\
        \end{cases} 
    \end{align}

    \item complementary slackness:
    \begin{align}
        \begin{cases}\mu_{j } p_{j }=0 ,~ \forall j\in [0:K-1] \\ 
        \alpha_{j }(p_j-e^{\epsilon} p_{j+1}) =0, ~ \forall j\in [0:K-2] \\ 
        \beta_{j}(p_j-e^{\epsilon} p_{j-1})=0,~ \forall j\in [1:K-1] \end{cases} 
    \end{align}
\end{enumerate}

A solution to the KKT conditions is given as follows:
 \begin{enumerate}
    \item primal variables: for $j\in [0:K-1]$,
    \begin{align}
        p_{j} &= \frac{e^{-j\epsilon }}{\sum^{K-1}_{i=0} Ns_{i}e^{-i\epsilon }} = \frac{e^{(K-1-j)\epsilon }}{N\left( e^{\epsilon }+N-1\right)^{K-1} } .
    \end{align}

    \item dual variables:
    \begin{align}
        \begin{cases}
        \mu_j =0 ,~\forall j\in [0:K-1]\\
        \beta_j = 0 ,~\forall j\in [1:K-1]\\
        \lambda = \frac{e^{(K-2)\epsilon }}{\left( N-1\right)  \sum^{K-1}_{i=0} s_{i}e^{\left( K-2-i\right)  \epsilon }} \\
        \alpha_{j\  } =\   \frac{N}{N-1} e^{j\epsilon }\left[ 1-\frac{\sum^{j}_{i=0} s_{i}e^{(K-2-i)\epsilon }}{\sum^{K-1}_{i=0} s_{i}e^{\left( K-2-i\right)  \epsilon }} \right], \\ \qquad\qquad\qquad\qquad \forall j\in [0:K-2] \end{cases}
    \end{align}
\end{enumerate}

It can be verified that the solution above satisfies all KKT conditions. 
The download cost is therefore given by
\begin{align}
    D^{\text{TSC}}\left( \epsilon \right) 
    &=1+\frac{\left( e^{\epsilon }+N-1\right)^{K-1}  -e^{(K-1)\epsilon }}{(N-1)\left( e^{\epsilon }+N-1\right)^{K-1}  } .
\end{align}



\section{Conclusion}

We study the problem of leaky private information retrieval (L-PIR) and optimize the trade-off between communication cost and user privacy leakage under the measurement of pure differential privacy. The optimized probability allocation adopts a simple layered structure: the retrieval using a random key with lower Hamming weight is assigned higher probability. Compared with other state-of-the-art L-PIR schemes, the proposed scheme achieves the leakage ratio exponent ${O}(\log K)$, which is a significant order-wise improvement over ${\Theta}(K)$. A promising direction in the future is the extension to multi-message retrieval.


\bibliographystyle{IEEEtran}
\bibliography{PIR}

\newpage
\appendices

\section{The TSC Code When $K=2$ and $N=3$}
\label{apx:permuteTSC}
We provide another example of the generalized TSC code with permutation in Table \ref{tab:W1} and Table \ref{tab:W2}. We consider a total of $N=3$ servers and $K=2$ messages. The message length is $L=N-1=2$, and we write $W_1 = (a_1,a_2)$, $W_2 = (b_1,b_2)$, where the dummy symbols $a_0$ and $b_0$ are omitted for conciseness. The random key $F^*$ has a total of $18$ possible realizations, each of which is associated with a random vector and the downshifted permutation function $(F,\pi)$. The queries in the TSC code are assigned with different probabilities according to their interference signals and permutation. Note that the interference signal is controlled by the first $(K-1)$ entries $F$ of the random key $F^*$. Let us denote $\|F\|$ as the \emph{size} of the interference corresponding to the random key $F$, which is also its Hamming weight. In this example, $\|F\|$ can only be $0$ or $1$. Note that $|\mathcal{F}|$ is used to denote the cardinality of the set $\mathcal{F}$, different from $\|F\|$.

\begin{table}[tb!]
\centering
\resizebox{\linewidth}{!}{
\renewcommand{\arraystretch}{1.2}
\begin{tabular}{|ccccccccc|}
\hline
\multicolumn{9}{|c|}{Requesting Message $k=1$}                                                                                                                                                                                                                                               \\ \hline
\multicolumn{1}{|c|}{\multirow{2}{*}{Prob.}} & \multicolumn{1}{c|}{\multirow{2}{*}{$F$}} & \multicolumn{1}{c|}{\multirow{2}{*}{$\pi$}}&\multicolumn{2}{c|}{Server 1}                                         & \multicolumn{2}{c|}{Server 2}                                         & \multicolumn{2}{c|}{Server 3}                    \\ \cline{4-9} 
\multicolumn{1}{|c|}{}                       & \multicolumn{1}{c|}{}      & \multicolumn{1}{c|}{}               & \multicolumn{1}{c|}{$Q_{1}^{[1]}$} & \multicolumn{1}{c|}{$A_{1}$}     & \multicolumn{1}{c|}{$Q_{2}^{[1]}$} & \multicolumn{1}{c|}{$A_{2}$}     & \multicolumn{1}{c|}{$Q_{3}^{[1]}$} & $A_{3}$     \\ \hline\hline


\multicolumn{1}{|c|}{$p_{(0)}^{1,{[2,1,0]}}$}    & \multicolumn{1}{c|}{$0$}  & \multicolumn{1}{c|}{$(2,1,0)$}                & \multicolumn{1}{c|}{$20$}          & \multicolumn{1}{c|}{$a_2$}       & \multicolumn{1}{c|}{$10$}          & \multicolumn{1}{c|}{$a_1$}       & \multicolumn{1}{c|}{$00$}          & $\emptyset$ \\ \hline
\multicolumn{1}{|c|}{$p_{(0)}^{1,[2, 0, 1]}$}    & \multicolumn{1}{c|}{$0$}  & \multicolumn{1}{c|}{$(2,0,1)$}                & \multicolumn{1}{c|}{$20$}          & \multicolumn{1}{c|}{$a_2$}       & \multicolumn{1}{c|}{$00$}          & \multicolumn{1}{c|}{$\emptyset$} & \multicolumn{1}{c|}{$10$}          & $a_1$       \\ \hline
\multicolumn{1}{|c|}{$p_{(0)}^{1,[1,2,0]}$}    & \multicolumn{1}{c|}{$0$}    & \multicolumn{1}{c|}{$(1,2,0)$}              & \multicolumn{1}{c|}{$10$}          & \multicolumn{1}{c|}{$a_1$}       & \multicolumn{1}{c|}{$20$}          & \multicolumn{1}{c|}{$a_2$}       & \multicolumn{1}{c|}{$00$}          & $\emptyset$ \\ \hline
\multicolumn{1}{|c|}{$p_{(0)}^{1,[1,0,2]}$}    & \multicolumn{1}{c|}{$0$}    & \multicolumn{1}{c|}{$(1,0,2)$}              & \multicolumn{1}{c|}{$10$}          & \multicolumn{1}{c|}{$a_1$}       & \multicolumn{1}{c|}{$00$}          & \multicolumn{1}{c|}{$\emptyset$} & \multicolumn{1}{c|}{$20$}          & $a_2$       \\ \hline
\multicolumn{1}{|c|}{$p_{(0)}^{1,[0,2,1]}$}    & \multicolumn{1}{c|}{$0$}    & \multicolumn{1}{c|}{$(0,2,1)$}              & \multicolumn{1}{c|}{$00$}          & \multicolumn{1}{c|}{$\emptyset$} & \multicolumn{1}{c|}{$20$}          & \multicolumn{1}{c|}{$a_2$}       & \multicolumn{1}{c|}{$10$}          & $a_1$       \\ \hline
\multicolumn{1}{|c|}{$p_{(0)}^{1,[0,1,2]}$}    & \multicolumn{1}{c|}{$0$}    & \multicolumn{1}{c|}{$(0,1,2)$}              & \multicolumn{1}{c|}{$00$}          & \multicolumn{1}{c|}{$\emptyset$} & \multicolumn{1}{c|}{$10$}          & \multicolumn{1}{c|}{$a_1$}       & \multicolumn{1}{c|}{$20$}          & $a_2$       \\ \hline\hline

\multicolumn{1}{|c|}{$p_{(1)}^{1,[2,1,0]}$}    & \multicolumn{1}{c|}{$1$}    & \multicolumn{1}{c|}{$(2,1,0)$}              & \multicolumn{1}{c|}{$11$}          & \multicolumn{1}{c|}{$a_1\oplus b_1$}   & \multicolumn{1}{c|}{$01$}          & \multicolumn{1}{c|}{$b_1$}       & \multicolumn{1}{c|}{$21$}          & $a_2\oplus b_1$   \\ \hline
\multicolumn{1}{|c|}{$p_{(1)}^{1,[2,0,1]}$}    & \multicolumn{1}{c|}{$1$}    & \multicolumn{1}{c|}{$(2,0,1)$}              & \multicolumn{1}{c|}{$11$}          & \multicolumn{1}{c|}{$a_1\oplus b_1$}   & \multicolumn{1}{c|}{$21$}          & \multicolumn{1}{c|}{$a_2\oplus b_1$}   & \multicolumn{1}{c|}{$01$}          & $b_1$       \\ \hline
\multicolumn{1}{|c|}{$p_{(1)}^{1,[1,2,0]}$}    & \multicolumn{1}{c|}{$1$}    & \multicolumn{1}{c|}{$(1,2,0)$}              & \multicolumn{1}{c|}{$01$}          & \multicolumn{1}{c|}{$b_1$}       & \multicolumn{1}{c|}{$11$}          & \multicolumn{1}{c|}{$a_1\oplus b_1$}   & \multicolumn{1}{c|}{$21$}          & $a_2\oplus b_1$   \\ \hline
\multicolumn{1}{|c|}{$p_{(1)}^{1,[1,0,2]}$}    & \multicolumn{1}{c|}{$1$}    & \multicolumn{1}{c|}{$(1,0,2)$}              & \multicolumn{1}{c|}{$01$}          & \multicolumn{1}{c|}{$b_1$}       & \multicolumn{1}{c|}{$21$}          & \multicolumn{1}{c|}{$a_2\oplus b_1$}   & \multicolumn{1}{c|}{$11$}          & $a_1\oplus b_1$   \\ \hline
\multicolumn{1}{|c|}{$p_{(1)}^{1,[0,2,1]}$}    & \multicolumn{1}{c|}{$1$}    & \multicolumn{1}{c|}{$(0,2,1)$}              & \multicolumn{1}{c|}{$21$}          & \multicolumn{1}{c|}{$a_2\oplus b_1$}   & \multicolumn{1}{c|}{$11$}          & \multicolumn{1}{c|}{$a_1\oplus b_1$}   & \multicolumn{1}{c|}{$01$}          & $b_1$       \\ \hline
\multicolumn{1}{|c|}{$p_{(1)}^{1,[0,1,2]}$}    & \multicolumn{1}{c|}{$1$}    & \multicolumn{1}{c|}{$(0,1,2)$}              & \multicolumn{1}{c|}{$21$}          & \multicolumn{1}{c|}{$a_2\oplus b_1$}   & \multicolumn{1}{c|}{$01$}          & \multicolumn{1}{c|}{$b_1$}       & \multicolumn{1}{c|}{$11$}          & $a_1\oplus b_1$   \\ \hline
\multicolumn{1}{|c|}{$p_{(2)}^{1,[2,1,0]}$}    & \multicolumn{1}{c|}{$2$}    & \multicolumn{1}{c|}{$(2,1,0)$}              & \multicolumn{1}{c|}{$02$}          & \multicolumn{1}{c|}{$b_2$}       & \multicolumn{1}{c|}{$22$}          & \multicolumn{1}{c|}{$a_2\oplus b_2$}   & \multicolumn{1}{c|}{$12$}          & $a_1\oplus b_2$   \\ \hline
\multicolumn{1}{|c|}{$p_{(2)}^{1,[2,0,1]}$}    & \multicolumn{1}{c|}{$2$}    & \multicolumn{1}{c|}{$(2,0,1)$}              & \multicolumn{1}{c|}{$02$}          & \multicolumn{1}{c|}{$b_2$}       & \multicolumn{1}{c|}{$12$}          & \multicolumn{1}{c|}{$a_1\oplus b_2$}   & \multicolumn{1}{c|}{$22$}          & $a_2\oplus b_2$   \\ \hline
\multicolumn{1}{|c|}{$p_{(2)}^{1,[1,2,0]}$}    & \multicolumn{1}{c|}{$2$}    & \multicolumn{1}{c|}{$(1,2,0)$}              & \multicolumn{1}{c|}{$22$}          & \multicolumn{1}{c|}{$a_2\oplus b_2$}   & \multicolumn{1}{c|}{$02$}          & \multicolumn{1}{c|}{$b_2$}       & \multicolumn{1}{c|}{$12$}          & $a_1\oplus b_2$   \\ \hline
\multicolumn{1}{|c|}{$p_{(2)}^{1,[1,0,2]}$}    & \multicolumn{1}{c|}{$2$}    & \multicolumn{1}{c|}{$(1,0,2)$}              & \multicolumn{1}{c|}{$22$}          & \multicolumn{1}{c|}{$a_2\oplus b_2$}   & \multicolumn{1}{c|}{$12$}          & \multicolumn{1}{c|}{$a_1\oplus b_2$}   & \multicolumn{1}{c|}{$02$}          & $b_2$       \\ \hline
\multicolumn{1}{|c|}{$p_{(2)}^{1,[0,2,1]}$}    & \multicolumn{1}{c|}{$2$}    & \multicolumn{1}{c|}{$(0,2,1)$}              & \multicolumn{1}{c|}{$12$}          & \multicolumn{1}{c|}{$a_1\oplus b_2$}   & \multicolumn{1}{c|}{$02$}          & \multicolumn{1}{c|}{$b_2$}       & \multicolumn{1}{c|}{$22$}          & $a_2\oplus b_2$   \\ \hline
\multicolumn{1}{|c|}{$p_{(2)}^{1,[0,1,2]}$}    & \multicolumn{1}{c|}{$2$}    & \multicolumn{1}{c|}{$(0,1,2)$}              & \multicolumn{1}{c|}{$12$}          & \multicolumn{1}{c|}{$a_1\oplus b_2$}   & \multicolumn{1}{c|}{$22$}          & \multicolumn{1}{c|}{$a_2\oplus b_2$}   & \multicolumn{1}{c|}{$02$}          & $b_2$       \\ \hline
\end{tabular}
}
\caption{TSC code for $N=3, K=2$: retrieval of $W_{1}$}
\label{tab:W1}
\end{table}

\begin{table}[tb!]
\centering
\resizebox{\linewidth}{!}{
\renewcommand{\arraystretch}{1.13}
\begin{tabular}{|ccccccccc|}
\hline
\multicolumn{9}{|c|}{Requesting Message $k=2$}                                                                                                                                                                                                                                               \\ \hline
\multicolumn{1}{|c|}{\multirow{2}{*}{Prob.}} & \multicolumn{1}{c|}{\multirow{2}{*}{$F$}} & \multicolumn{1}{c|}{\multirow{2}{*}{$\pi$}}&\multicolumn{2}{c|}{Server 1}                                         & \multicolumn{2}{c|}{Server 2}                                         & \multicolumn{2}{c|}{Server 3}                    \\ \cline{4-9} 
\multicolumn{1}{|c|}{}                       & \multicolumn{1}{c|}{}      & \multicolumn{1}{c|}{}               & \multicolumn{1}{c|}{$Q_{1}^{[2]}$} & \multicolumn{1}{c|}{$A_{1}$}     & \multicolumn{1}{c|}{$Q_{2}^{[2]}$} & \multicolumn{1}{c|}{$A_{2}$}     & \multicolumn{1}{c|}{$Q_{3}^{[2]}$} & $A_{3}$     \\ \hline\hline


\multicolumn{1}{|c|}{$p_{(0)}^{2,{[2,1,0]}}$}    & \multicolumn{1}{c|}{$0$}  & \multicolumn{1}{c|}{$(2,1,0)$}                & \multicolumn{1}{c|}{$02$}          & \multicolumn{1}{c|}{$b_2$}       & \multicolumn{1}{c|}{$01$}          & \multicolumn{1}{c|}{$b_1$}       & \multicolumn{1}{c|}{$00$}          & $\emptyset$ \\ \hline
\multicolumn{1}{|c|}{$p_{(0)}^{2,[2, 0, 1]}$}    & \multicolumn{1}{c|}{$0$}  & \multicolumn{1}{c|}{$(2,0,1)$}                & \multicolumn{1}{c|}{$02$}          & \multicolumn{1}{c|}{$b_2$}       & \multicolumn{1}{c|}{$00$}          & \multicolumn{1}{c|}{$\emptyset$} & \multicolumn{1}{c|}{$01$}          & $b_1$       \\ \hline
\multicolumn{1}{|c|}{$p_{(0)}^{2,[1,2,0]}$}    & \multicolumn{1}{c|}{$0$}    & \multicolumn{1}{c|}{$(1,2,0)$}              & \multicolumn{1}{c|}{$01$}          & \multicolumn{1}{c|}{$b_1$}       & \multicolumn{1}{c|}{$02$}          & \multicolumn{1}{c|}{$b_2$}       & \multicolumn{1}{c|}{$00$}          & $\emptyset$ \\ \hline
\multicolumn{1}{|c|}{$p_{(0)}^{2,[1,0,2]}$}    & \multicolumn{1}{c|}{$0$}    & \multicolumn{1}{c|}{$(1,0,2)$}              & \multicolumn{1}{c|}{$01$}          & \multicolumn{1}{c|}{$b_1$}       & \multicolumn{1}{c|}{$00$}          & \multicolumn{1}{c|}{$\emptyset$} & \multicolumn{1}{c|}{$02$}          & $b_2$       \\ \hline
\multicolumn{1}{|c|}{$p_{(0)}^{2,[0,2,1]}$}    & \multicolumn{1}{c|}{$0$}    & \multicolumn{1}{c|}{$(0,2,1)$}              & \multicolumn{1}{c|}{$00$}          & \multicolumn{1}{c|}{$\emptyset$} & \multicolumn{1}{c|}{$02$}          & \multicolumn{1}{c|}{$b_2$}       & \multicolumn{1}{c|}{$01$}          & $b_1$       \\ \hline
\multicolumn{1}{|c|}{$p_{(0)}^{2,[0,1,2]}$}    & \multicolumn{1}{c|}{$0$}    & \multicolumn{1}{c|}{$(0,1,2)$}              & \multicolumn{1}{c|}{$00$}          & \multicolumn{1}{c|}{$\emptyset$} & \multicolumn{1}{c|}{$01$}          & \multicolumn{1}{c|}{$b_1$}       & \multicolumn{1}{c|}{$02$}          & $b_2$       \\ \hline \hline

\multicolumn{1}{|c|}{$p_{(1)}^{2,[2,1,0]}$}    & \multicolumn{1}{c|}{$1$}    & \multicolumn{1}{c|}{$(2,1,0)$}              & \multicolumn{1}{c|}{$11$}          & \multicolumn{1}{c|}{$a_1\oplus b_1$}   & \multicolumn{1}{c|}{$10$}          & \multicolumn{1}{c|}{$a_1$}       & \multicolumn{1}{c|}{$12$}          & $a_1\oplus b_2$   \\ \hline
\multicolumn{1}{|c|}{$p_{(1)}^{2,[2,0,1]}$}    & \multicolumn{1}{c|}{$1$}    & \multicolumn{1}{c|}{$(2,0,1)$}              & \multicolumn{1}{c|}{$11$}          & \multicolumn{1}{c|}{$a_1\oplus b_1$}   & \multicolumn{1}{c|}{$12$}          & \multicolumn{1}{c|}{$a_1\oplus b_2$}   & \multicolumn{1}{c|}{$10$}          & $a_1$       \\ \hline
\multicolumn{1}{|c|}{$p_{(1)}^{2,[1,2,0]}$}    & \multicolumn{1}{c|}{$1$}    & \multicolumn{1}{c|}{$(1,2,0)$}              & \multicolumn{1}{c|}{$10$}          & \multicolumn{1}{c|}{$a_1$}       & \multicolumn{1}{c|}{$11$}          & \multicolumn{1}{c|}{$a_1\oplus b_1$}   & \multicolumn{1}{c|}{$12$}          & $a_1\oplus b_2$   \\ \hline
\multicolumn{1}{|c|}{$p_{(1)}^{2,[1,0,2]}$}    & \multicolumn{1}{c|}{$1$}    & \multicolumn{1}{c|}{$(1,0,2)$}              & \multicolumn{1}{c|}{$10$}          & \multicolumn{1}{c|}{$a_1$}       & \multicolumn{1}{c|}{$12$}          & \multicolumn{1}{c|}{$a_1\oplus b_2$}   & \multicolumn{1}{c|}{$11$}          & $a_1\oplus b_1$   \\ \hline
\multicolumn{1}{|c|}{$p_{(1)}^{2,[0,2,1]}$}    & \multicolumn{1}{c|}{$1$}    & \multicolumn{1}{c|}{$(0,2,1)$}              & \multicolumn{1}{c|}{$12$}          & \multicolumn{1}{c|}{$a_1\oplus b_2$}   & \multicolumn{1}{c|}{$11$}          & \multicolumn{1}{c|}{$a_1\oplus b_1$}   & \multicolumn{1}{c|}{$10$}          & $a_1$       \\ \hline
\multicolumn{1}{|c|}{$p_{(1)}^{2,[0,1,2]}$}    & \multicolumn{1}{c|}{$1$}    & \multicolumn{1}{c|}{$(0,1,2)$}              & \multicolumn{1}{c|}{$12$}          & \multicolumn{1}{c|}{$a_1\oplus b_2$}   & \multicolumn{1}{c|}{$10$}          & \multicolumn{1}{c|}{$a_1$}       & \multicolumn{1}{c|}{$11$}          & $a_1\oplus b_1$   \\ \hline
\multicolumn{1}{|c|}{$p_{(2)}^{2,[2,1,0]}$}    & \multicolumn{1}{c|}{$2$}    & \multicolumn{1}{c|}{$(2,1,0)$}              & \multicolumn{1}{c|}{$20$}          & \multicolumn{1}{c|}{$a_2$}       & \multicolumn{1}{c|}{$22$}          & \multicolumn{1}{c|}{$a_2\oplus b_2$}   & \multicolumn{1}{c|}{$21$}          & $a_2\oplus b_1$   \\ \hline
\multicolumn{1}{|c|}{$p_{(2)}^{2,[2,0,1]}$}    & \multicolumn{1}{c|}{$2$}    & \multicolumn{1}{c|}{$(2,0,1)$}              & \multicolumn{1}{c|}{$20$}          & \multicolumn{1}{c|}{$a_2$}       & \multicolumn{1}{c|}{$21$}          & \multicolumn{1}{c|}{$a_2\oplus b_1$}   & \multicolumn{1}{c|}{$22$}          & $a_2\oplus b_2$   \\ \hline
\multicolumn{1}{|c|}{$p_{(2)}^{2,[1,2,0]}$}    & \multicolumn{1}{c|}{$2$}    & \multicolumn{1}{c|}{$(1,2,0)$}              & \multicolumn{1}{c|}{$22$}          & \multicolumn{1}{c|}{$a_2\oplus b_2$}   & \multicolumn{1}{c|}{$20$}          & \multicolumn{1}{c|}{$a_2$}       & \multicolumn{1}{c|}{$21$}          & $a_2\oplus b_1$   \\ \hline
\multicolumn{1}{|c|}{$p_{(2)}^{2,[1,0,2]}$}    & \multicolumn{1}{c|}{$2$}    & \multicolumn{1}{c|}{$(1,0,2)$}              & \multicolumn{1}{c|}{$22$}          & \multicolumn{1}{c|}{$a_2\oplus b_2$}   & \multicolumn{1}{c|}{$21$}          & \multicolumn{1}{c|}{$a_2\oplus b_1$}   & \multicolumn{1}{c|}{$20$}          & $a_2$       \\ \hline
\multicolumn{1}{|c|}{$p_{(2)}^{2,[0,2,1]}$}    & \multicolumn{1}{c|}{$2$}    & \multicolumn{1}{c|}{$(0,2,1)$}              & \multicolumn{1}{c|}{$21$}          & \multicolumn{1}{c|}{$a_2\oplus b_1$}   & \multicolumn{1}{c|}{$20$}          & \multicolumn{1}{c|}{$a_2$}       & \multicolumn{1}{c|}{$22$}          & $a_2\oplus b_2$   \\ \hline
\multicolumn{1}{|c|}{$p_{(2)}^{2,[0,1,2]}$}    & \multicolumn{1}{c|}{$2$}    & \multicolumn{1}{c|}{$(0,1,2)$}              & \multicolumn{1}{c|}{$21$}          & \multicolumn{1}{c|}{$a_2\oplus b_1$}   & \multicolumn{1}{c|}{$22$}          & \multicolumn{1}{c|}{$a_2\oplus b_2$}   & \multicolumn{1}{c|}{$20$}          & $a_2$       \\ \hline
\end{tabular}}
\caption{TSC code for $N=3, K=2$: retrieval of $W_{2}$}
\label{tab:W2}
\end{table}

\section{Proof of Proposition \ref{prop:P1P2}}
\label{apx:proof-prop-P1P2}

\begin{proof}
The direction $(P1)\leq (P2)$: this direction is trivially true since $(P2)$ can be viewed as $(P1)$ under the additional constraints enforced through \eqref{eqn:reduced}. 

The direction $(P1)\geq (P2)$: 
Recall that the query $Q_n^{[k]}$ can take any possible values in $\mathcal{Q}$. 
Denote $t_j\triangleq |\{q \in \mathcal{Q}:\|q\|=j\}|$, which is calculated as 
\begin{align}
    t_j= \left(\begin{array}{c}K\\j\end{array}\right)(N-1)^j, \forall~j\in[0:K].   \label{coefficient-t-j}
\end{align}
For notational simplicity, let $p_{-1}=p_K=0$. Similarly, we use $s_j$ to denote $|\mathcal{F}_j|$, i.e, the number of random key $f$ that has Hamming weight $j$, given by
\begin{align}
    s_j= \left(\begin{array}{c}K-1\\j\end{array}\right)(N-1)^j, \forall~j\in[0:K-1].    \label{coefficient-s-j}
\end{align}

Given an optimal solution in $(P1)$, we can find the following assignment of $p_0, \ldots, p_{K-1}$: 
\begin{align}
    p_j =\frac{1}{NKs_j}\sum_{k=1}^{K}\sum_{\pi\in \Pc}\sum_{f\in\Fc_j} p_{(f)}^{k,\pi}, \quad j\in [0:K-1].   \label{def-p-j-proof}
\end{align}

With the relation \eqref{def-p-j-proof}, we have $p_j \geq 0$ for any $j \in [0:K-1]$, and moreover, 
\begin{align}
   &\sum_{j=0}^{K-1} Ns_jp_j = \frac{1}{K} \sum_{k=1}^{K}\left(\sum_{\pi\in \Pc}\sum_{f\in\Fc}p_{(f)}^{k,\pi}\right)=1, \label{prob-constraint-proof}
\end{align}

For the $\epsilon$-DP constraints, it is trivially true when $\|q\|=0$ given the fact that for all $k\in[1:K]$,
\begin{equation}
\begin{aligned}
    &\frac{1}{NK}\sum_{k=1}^{K}\sum_{n=1}^{N}\sum_{\pi: \phi^*_n(k,(f,\pi))=0}p^{k,\pi}_{(f)} \\
    = &\frac{1}{NK}\sum_{k=1}^{K}\sum_{\pi}p_{(0)}^{k,\pi} \\
    = &p_0 .
\end{aligned}
\end{equation}

When $\|q\|=K$, we first introduce the notation
\begin{align}
     & q|k \triangleq (q_1,q_2,\ldots,q_{k-1},q_{k+1},\ldots,q_{K}), \\
     & q_{n,k} \triangleq Q^{[k]}_{n,k},
\end{align}
which denotes the query vector with the $k$-th symbol removed and the $k$-th symbol of $Q_{n}^{[k]}$, respectively. Then we have
\begin{align}
    \sum_{n=1}^{N} \sum_{(f,\pi):\phi_{n}^*(k,(f,\pi))=q}p_{(f)}^{k,\pi}= \sum_{\pi}p_{(q|k)}^{k,\pi},
\end{align}
since for that fixed $q$, the corresponding $f$ is fixed, and for each $\pi$, there is one and only one $n$ such that $\phi_{n}^*(k,(f,\pi))=q$ holds. 

Therefore, 
\begin{equation}
\begin{aligned}
    & \sum_{\|q\|=K} \sum_{n=1}^{N} \sum_{(f,\pi):\phi_{n}^*(k,(f,\pi))=q}p_{(f)}^{k,\pi} \\
    = & \sum_{\|q\|=K}\sum_{\pi}p_{(q|k)}^{k,\pi} \\
    = &(N-1)\sum_{\|f\|=K-1}\sum_{\pi}p_{(f)}^{k,\pi},
\end{aligned}
\end{equation}
since for each $k$, each $f$ with $\|f\|=K-1$ corresponds to exactly $N-1$ queries with $\|q\|=K$. Therefore, for all $k\in[1:k]$,
\begin{align}
    &\frac{1}{NK} \sum_{k=1}^{K}\sum_{\|q\|=K}\sum_{n=1}^{N}\sum_{\pi: \phi^*_n(k,(f,\pi))=q}p^{k,\pi}_{(f)} \notag \\
   =&\frac{N-1}{NK}\sum_{k=1}^{K}\sum_{\|f\|=K-1}\sum_{\pi}p_{(f)}^{k,\pi} \\
   =&t_K p_{K-1},
\end{align}
which implies that the $\epsilon$-DP constraint is true for $\|q\|=K$.

When $\|q\|=j\in[1:K-1]$, in a similar manner, we observe that:
\begin{align}
    &\frac{1}{KN} \sum_{k=1}^{K}\sum_{\|q\|=j}\sum_{n=1}^{N}\sum_{\pi: \phi^*_n(k,(f,\pi))=q}p^{k,\pi}_{(f)} \\
    = &\frac{1}{KN} \sum_{k=1}^{K} \sum_{\|q\|=j}\sum_{\pi}p_{(q|k)}^{k,\pi} \\
    = & \frac{1}{N} \sum_{\| q\| =j} \left\{ \frac{j}{K}\left( \frac{1}{j} \sum^{K}_{k=1} \mathds{1}(\| q|k\| =j-1)\sum_{\pi } p^{k,\pi }_{(q|k)}\right) \right. \notag \\
    & \left. + \frac{K-j}{K} \left( \frac{1}{K-j} \sum^{K}_{k=1} \mathds{1}(\| q|k\| =j)\sum_{\pi } p^{k,\pi }_{(q|k)}\right) \right\}  \label{eqn:countj} \\
    = & \frac{j}{K}\left( \frac{1}{N} \frac{N-1}{j} \sum^{K}_{k=1} \sum_{\| f\| =j-1} \sum_{\pi } p^{k,\pi }_{(f)}\right) \notag \\
    & + \frac{K-j}{K} \left( \frac{1}{N} \frac{1}{K-j} \sum^{K}_{k=1} \sum_{\| f\| =j} \sum_{\pi } p^{k,\pi }_{(f)}\right)  \label{permute-reduce-9-4}\\
    = & \frac{j}{K}\left( \frac{1}{N} \frac{t_{j}}{Ks_{j-1}} \sum^{K}_{k=1} \sum_{\| f\| =j-1} \sum_{\pi } p^{k,\pi }_{(f)}\right) \notag \\ 
    & + \frac{K-j}{K} \left(\frac{1}{N} \frac{t_{j}}{Ks_{j}} \sum^{K}_{k=1} \sum_{\| f\| =j} \sum_{\pi } p^{k,\pi }_{(f)}\right) \label{permute-reduce-9-5}\\
    = & \frac{j}{K} t_j p_{j-1} + \frac{K-j}{K} t_j p_{j}. \label{permute-reduce-9-6}
\end{align}
where \eqref{eqn:countj} follows from $\left\{ f:f=q|k,\| f\| =j-1\right\}  =j$ and $\left\{ f:f=q|k,\| f\| =j\right\} =K-j$ for a given $q$, and $\mathds{1}(\cdot)$ is the indicator function, \eqref{permute-reduce-9-4} follows by counting the number of $f$ for each $q$, 
\eqref{permute-reduce-9-5} follows from $\frac{t_j}{s_{j-1}}=\frac{K(N-1)}{j}$ and $\frac{t_j}{s_j}=\frac{K}{K-j}$, and \eqref{permute-reduce-9-6} follows by the assignment of $p_j$ in \eqref{def-p-j-proof}. Therefore, we have the following bounds for every $j\in[1:K-1]$:
\begin{align}
& t_j \cdot \min \{p_{j-1}, p_j\} \leq \frac{j}{K} t_{j}p_{j-1}+\frac{K-j}{K} t_{j}p_{j} \\
&t_j \cdot \max\{p_{j-1}, p_j\} \geq \frac{j}{K} t_{j}p_{j-1}+\frac{K-j}{K} t_{j}p_{j}.
\end{align}

By $\epsilon$-DP constraint, for every $j\in[1:K-1]$ we need 
\begin{align}
    t_j \cdot \max\{p_{j-1}, p_j\} - t_j  e^{\epsilon} \cdot \min \{p_{j-1},p_j\} \leq 0.
\end{align}
This implies that for all $j\in[1:K-1]$,
\begin{align}
    \begin{cases}p_{j-1}-e^{\epsilon }p_{j}\leq 0,\\ 
    p_{j}-e^{\epsilon }p_{j-1}\leq 0.\end{cases}
\end{align}

It remains to show that this assignment leads to a lower objective function value in $(P2)$ than the optimal value of $(P1)$. By the convexity of the max function,
\begin{align}
    D &=\max_{k}\left\{ \left( \sum_{\pi\in \Pc} p^{k,\pi}_{(0)} \right)+\frac{N}{N-1} \left(1-\sum_{\pi\in \Pc} p^{k,\pi}_{(0)}\right)\right\} \\
    &\geq \frac{1}{K} \sum^{K}_{k=1} \left( \sum_{\pi \in \mathcal{P} } p^{k,\pi }_{(0)}+\frac{N}{N-1} \left( 1-\sum_{\pi \in \mathcal{P} } p^{k,\pi }_{(0)}\right)  \right) \\
    &= Np_{0}+\frac{N}{N-1} \left( 1-Np_{0}\right) \\
    &= \frac{N}{N-1} \left( 1-p_{0}\right) .
\end{align}
This proves the inequality $(P1)\geq (P2)$, and the proof is complete.
\end{proof}

\end{document}

\ifCLASSINFOpdf
\else
\fi
